\documentclass[runningheads]{llncs}

\usepackage{amsmath,amssymb,mathtools}
\usepackage{mathrsfs}
\usepackage{listings}
\usepackage{booktabs}
\usepackage{xcolor}
\usepackage{enumitem}
\usepackage{tikz}
\usepackage{cite}
\usetikzlibrary{arrows.meta,positioning}
\usepackage[colorlinks=true, linkcolor=blue!70!black, citecolor=blue!70!black, urlcolor=blue!70!black]{hyperref}
\usepackage{verified-badges}

\lstdefinelanguage{ZipperGen}{
  keywords={def,workflow,act,msg,if,else,while,return,var,true,false,and,or,not},
  keywordstyle=\bfseries\color{blue!70!black},
  identifierstyle=\color{black},
  stringstyle=\color{red!70!black},
  commentstyle=\itshape\color{gray},
  morecomment=[l]{//},
  morestring=[b]",
  sensitive=true
}

\lstdefinelanguage{Monitor}{
  keywords={procedure,if,then,else,for,each,in,do,end,return,case,of,true,false,and,or,not},
  keywordstyle=\bfseries\color{blue!70!black},
  identifierstyle=\color{black},
  commentstyle=\itshape\color{gray},
  morecomment=[l]{//},
  sensitive=false
}
\AtBeginDocument{}
\newcounter{cpllisting}

\newcommand{\Lifelines}{\mathscr L}
\newcommand{\Eval}{\mathsf{Eval}}
\newcommand{\Events}{E}

\newcommand{\pid}{\mathsf{pid}}

\newcommand{\val}{\mathsf{val}}
\newcommand{\kind}{\mathsf{kind}}
\newcommand{\msg}{\lhd}
\newcommand{\causal}{\leq}
\newcommand{\strictcausal}{<}

\newcommand{\last}{\mathsf{last}}
\newcommand{\sub}{\mathsf{sub}}
\newcommand{\vars}{\mathsf{vars}}
\newcommand{\vc}{\mathsf{vc}}
\newcommand{\view}{\mathsf{view}}
\newcommand{\var}{\mathsf{var}}
\newcommand{\old}{\mathsf{old}}
\newcommand{\Past}{\mathsf{P}}
\newcommand{\Prev}{\mathsf{Y}}
\newcommand{\At}[2]{\mathord{\mathtt{@}}#1(#2)}
\newcommand{\AtField}[2]{\mathord{\mathtt{@}}#1.#2}
\newcommand{\GuardAt}[2]{#1\mathord{\mathtt{@}}#2}
\newcommand{\PastAt}[2]{\mathsf{P}_{#1}(#2)}
\newcommand{\Seen}[1]{\mathsf{Seen}_{#1}}
\newcommand{\Since}{\mathbin{\mathsf{S}}}

\newcommand{\lastloc}{\mathsf{last}_{\mathsf{loc}}}
\newcommand{\true}{\mathsf{true}}
\newcommand{\false}{\mathsf{false}}
\newcommand{\sem}[1]{\lbrack\!\lbrack #1 \rbrack\!\rbrack}

\begin{document}
\pagestyle{plain}

\title{Causal Past Logic for Runtime Verification of Distributed LLM Agent Workflows}
\titlerunning{Causal Past Logic for Runtime Verification}

\author{Benedikt Bollig}

\institute{Université Paris-Saclay, CNRS, ENS Paris-Saclay, LMF, Gif-sur-Yvette, France}

\maketitle

\begin{abstract}
We study runtime monitoring for distributed LLM-agent workflows. In an asynchronous execution, a decision can only depend on events that are causally visible to the lifeline that makes it: an event that appears earlier in some log may still be unknown locally. We extend the ZipperGen agent-workflow framework with Causal Past Logic (CPL), an adaptation of PT-DTL to guards in if-constructs and while loops. In addition to standard past-time modalities such as previous and since, a guard can inspect the latest causally visible event of another lifeline and selected variables stored there. The owner evaluates the guard online to select the next branch or loop step. We adapt the knowledge-vector monitor to ZipperGen and prove that the locally computed monitor value coincides with the denotational semantics of the guard at the current event.
\keywords{Runtime verification \and Message sequence charts \and LLM agents \and Distributed systems \and Causal order}
\end{abstract}

\section{Introduction}
\label{sec:introduction}

Several agents based on large language models (LLMs) can cooperate on a task,
each with its own tools and local context.
Suppose a patch is changed after it has passed the tests. The earlier test result
should not justify merging the new version. The committer must therefore check
which version the result concerns, using the information it has received so far.

We use runtime verification \cite{LeuckerS09,BauerLS11,BartocciFFR18} to evaluate temporal conditions at workflow decisions. Classical runtime verification often uses temporal specifications and raises an alarm when an execution violates them. Temporal guards can also guide distributed workflow enactment~\cite{Singh03}. For example, a committer lacking a clean test result in its causal past can block the merge or ask a human reviewer to decide.

We develop this idea for ZipperGen~\cite{BolligFN2026}. ZipperGen programs
are written as global workflows and projected to local agent programs that
communicate asynchronously. ZipperGen follows the choreographic
approach~\cite{CarboneM13,Montesi2023,ShenKK23,HirschG22},
where the programmer writes the coordination from a global point of view, and the
framework derives the local behavior of the individual agents. We use the
term lifeline for the local participant, usually an agent, that executes its
part of the workflow. In ZipperGen, branch and loop decisions have an
explicit owner lifeline. The projection inserts control broadcasts so that
the lifelines that need to know a decision are informed about it.
It coordinates choices, but remote events may still be unknown to
the lifeline making a decision.


We extend ZipperGen guards with Causal Past Logic (CPL), a PT-DTL-style
past-time logic~\cite{SenVAR04} interpreted over message sequence charts
(MSCs). A guard is written as \(\GuardAt{\varphi}{A}\) and is evaluated locally by lifeline \(A\), which owns the corresponding branch or loop. CPL formulas can express causal-temporal facts such as ``the causal past contains a test pass that remains valid at the latest visible TestRunner event''. They can also combine such facts with identity checks, for instance whether the visible test result concerns the same merge candidate as the current proposal.

\paragraph{Contributions.}
We adapt PT-DTL~\cite{SenVAR04} to source-level ZipperGen guards. At an
owner-side choice event, a guard may inspect the latest causally visible event
of another lifeline and selected variables stored there. Its value selects the
next branch or loop step. The monitor follows the knowledge-vector scheme
of~\cite{SenVAR04}, using full vector clocks and latest-value views.
A correctness proof for formula-only \mbox{ptDTL} monitoring is given
in~\cite{ScheffelS14}. We prove that the monitor used here agrees with the
denotational semantics of CPL, implement the construction in ZipperGen
(\url{https://zippergen.io}), and measure its runtime and metadata costs.

\paragraph{Mechanization.}
We mechanized the definitions of MSCs, CPL satisfaction, the local
evaluator, latest-value monitor state, the operational event update, and
the monitor-correctness theorem in Lean~4~\cite{Moura2021}.
Lean represents monitor views as extensional functions, using propositions for
formula values and local event indices for clock components.
Finite-table refinement, executable Boolean evaluation, and the size bounds
are outside the mechanization. Badges next to definitions and results open the
corresponding source files.

\paragraph{Outline.}
Section~\ref{sec:overview} motivates the approach with an example. In Section~\ref{sec:executions}, we define executions in terms of MSCs. Section~\ref{sec:cpl} defines Causal Past Logic. Section~\ref{sec:monitoring} presents the online monitoring algorithm and proves correctness. Section~\ref{sec:evaluation} describes the implementation and gives a preliminary evaluation, and Section~\ref{sec:related} discusses related work. We conclude in Section~\ref{sec:conclusion}.

\section{A Distributed Code Review Example}
\label{sec:overview}

In an LLM agent workflow, agents exchange messages over dedicated channels
and maintain separate local memories. In the formal development, we call each
agent a \emph{lifeline}. In this section, we use the two terms
interchangeably.

Consider a coding-agent workflow in which a patch is checked by several
specialized agents. A TestRunner executes the test suite, a Security agent
checks for vulnerabilities, and a Committer decides whether a merge
candidate may be merged. These agents run concurrently and report their
results asynchronously. They may also send updated assessments after an
initial result. For example, the required pre-merge checks may pass, while
a subsequent check detects a regression. Similarly, a
security scan may first clear a patch and later report a vulnerable
dependency.

The Committer may merge a pull request only if, in its causal past, the
required tests have passed with no later visible failure, and the security
scan has cleared the patch with no later visible critical issue. Both
results must concern the same merge candidate as the current proposal.

\begin{figure}[t]
\centering
\begin{tikzpicture}[
  >=Latex,
  agent/.style={draw, rounded corners=2pt, fill=blue!5, minimum width=26mm, minimum height=9mm, align=center, font=\small},
  msg/.style={->, thick},
]
  \node[agent] (orch) at (0,   2.2) {Orchestrator};
  \node[agent] (comm) at (8.6, 2.2) {Committer};
  \node[agent] (test) at (4.3, 3.4)   {TestRunner};
  \node[agent] (sec)  at (4.3, 1)   {Security};

  \draw[msg] (orch) -- node[above, font=\scriptsize, pos=.5] {merge proposal} (comm);
  \draw[msg] (orch) -- node[above, sloped, font=\scriptsize, pos=.44] {patch} (test);
  \draw[msg] (orch) -- node[below, sloped, font=\scriptsize, pos=.48] {patch} (sec);
  \draw[msg] (test) -- node[above, sloped, font=\scriptsize, pos=.54] {test results~~~~~} (comm);
  \draw[msg] (sec)  -- node[below, sloped, font=\scriptsize, pos=.54] {scan results~~~~~} (comm);
\end{tikzpicture}
\caption{Architecture for concurrent code review. The Orchestrator distributes the patch to the TestRunner and Security agent, and sends a merge proposal to the Committer when a candidate is ready to be considered. The proposal triggers the Committer's decision procedure. The reviewers run in parallel and may send updated assessments after their first report.}
\label{fig:review-architecture}
\end{figure}

Figure~\ref{fig:review-architecture} shows the architecture. The merge
proposal is an ordinary message to the Committer. Its receive event stores
the proposed candidate locally and triggers the decision procedure. The guard is
evaluated at the following owner-side choice event. It checks the latest visible state of each reviewer, and also checks
that these states refer to the current candidate:
\[
\begin{aligned}
  &\AtField{\mathsf{TestRunner}}{\mathit{candidate}} = \mathit{candidate}
  ~\wedge~
  \AtField{\mathsf{Security}}{\mathit{candidate}} = \mathit{candidate}\\
  {}\wedge\;&
  \At{\mathsf{TestRunner}}{
    (\mathit{status}\neq\mathsf{failed}\wedge
     \mathit{status}\neq\mathsf{pending})
    \,\Since\, \mathit{status}=\mathsf{passed}
  } \\
  {}\wedge\;&
  \At{\mathsf{Security}}{
    (\mathit{status}\neq\mathsf{critical}\wedge
     \mathit{status}\neq\mathsf{pending})
    \,\Since\, \mathit{status}=\mathsf{cleared}
  }.
\end{aligned}
\]
Here \(\mathit{candidate}\) is the merge candidate stored at the current
Committer event. The term \(\AtField{A}{x}\) denotes the value of variable
\(x\) at the latest event of lifeline \(A\) that is causally visible to the
Committer. Thus, the two equalities check that the latest visible reviewer
episodes concern the current candidate. The formula \(\At{A}{\varphi}\) evaluates
\(\varphi\) at that same latest visible event of \(A\). The temporal subformulas, where \(\Since\) is the past-time since
operator, are evaluated along the local histories of the review lifelines.

Each reviewer processes candidate episodes sequentially. The TestRunner
status ranges over \(\mathsf{pending}\), \(\mathsf{checking}\),
\(\mathsf{passed}\), and \(\mathsf{failed}\).
When it starts candidate \(c\), one event atomically sets
\(\mathit{candidate}:=c\) and \(\mathit{status}:=\mathsf{pending}\). This
\(\mathsf{pending}\) event delimits the new episode. The value
\(\mathsf{passed}\) records that the required pre-merge checks have cleared,
and later \(\mathsf{checking}\) events may record additional testing. The
temporal condition remains true through such events. A later
\(\mathsf{failed}\) event invalidates the pass, while a later
\(\mathsf{pending}\) event begins another episode and prevents the earlier
pass from witnessing the condition. A new \(\mathsf{passed}\) event is then
required. Together, the candidate equality identifies the current episode
and the temporal condition checks its status history.

The Security lifeline analogously uses \(\mathsf{pending}\),
\(\mathsf{scanning}\), \(\mathsf{cleared}\), and \(\mathsf{critical}\), and
resets its candidate and status atomically at the start of each review.
Each reviewer lifeline has one current candidate. Sequential candidate
episodes share that lifeline, and simultaneous independent reviews use separate
reviewer lifelines or candidate-indexed application state.

The helper \texttt{review\_ok} in Listing~\ref{lst:impl} constructs the two
episode conditions above. The resulting guard directly controls the branch
owned by the Committer.

\begin{figure}[t]
\refstepcounter{cpllisting}
\label{lst:impl}
\noindent\textbf{Listing \thecpllisting.} Complete CPL guard and owner-side
choice in ZipperGen.
\begin{lstlisting}[language=ZipperGen,numbers=none,
  basicstyle=\ttfamily\footnotesize,columns=flexible,breaklines=true]
not_pending = Here.status != "pending"

def review_ok(A, bad, good):
    return At[A](since(
        (Here.status != bad) & not_pending,
        Here.status == good))

merge_guard = (
    (At[TestRunner].candidate == Here.candidate)
    & (At[Security].candidate == Here.candidate)
    & review_ok(TestRunner, "failed", "passed")
    & review_ok(Security, "critical", "cleared"))

@workflow
def merge_candidate(candidate: str @ Orchestrator) -> str:
    ...
    if merge_guard @ Committer:
        Committer: decision = approve(candidate)
    else:
        Committer: decision = reject(candidate)
\end{lstlisting}
\end{figure}

Reviewer reports make the corresponding reviewer states causally visible to
the Committer. The guard itself is stated over those reviewer states. A
Committer-local encoding would
have to reconstruct the reviewers' local orders from received reports, for
example using sequence numbers or causal metadata. This is especially
relevant when reports are propagated through intermediate participants or
delivered out of order: their arrival order at the Committer may differ from
the reviewer's local order.

\begin{figure}[t]
\centering
\begin{tikzpicture}[
  x=1.05cm,
  y=.92cm,
  >=Latex,
  font=\scriptsize,
  event/.style={circle, fill=black, inner sep=1.35pt},
  staleevent/.style={circle, fill=red!70!black, inner sep=1.35pt},
  stale/.style={red!70!black},
  line/.style={gray!65}
]
  \foreach \x/\name in {0/Orchestrator, 3.5/TestRunner, 7/Committer} {
    \node[font=\small] at (\x, 0) {\name};
    \draw[line] (\x, -0.35) -- (\x, -6.2);
  }

  \coordinate (tsend)  at (3.5, -1.0);
  \coordinate (trecv)  at (7,   -1.9);
  \coordinate (tfsend) at (3.5, -2);
  \coordinate (osend)  at (0,   -3.2);
  \coordinate (orecv)  at (7,   -4.1);
  \coordinate (cdec)   at (7,   -5.0);
  \coordinate (tfrecv) at (7,   -5.9);

  \foreach \p in {tsend, trecv, osend, orecv, cdec}
    \node[event] at (\p) {};
  \foreach \p in {tfsend, tfrecv}
    \node[staleevent] at (\p) {};

  \node[left=2mm, align=right] at (tsend)  {required check passes\\result sent};
  \node[left=2mm, align=right, stale] at (tfsend) {full check fails\\result sent};
  \node[left=2mm, align=right] at (osend)  {send merge proposal};
  \node[right=2mm, align=left]  at (cdec)  {choice event};
  \node[right=2mm, align=left, stale] at (tfrecv) {failure arrives\\too late};

  \draw[->] (tsend)  -- node[above, sloped, pos=.5] {\textsf{test results}}  (trecv);
  \draw[->] (osend)  -- node[above, sloped, pos=.3] {merge proposal}     (orecv);
  \draw[->, stale] (tfsend) -- node[above, sloped, pos=.3] {\textsf{test results}} (tfrecv);
\end{tikzpicture}
\caption{Causal structure at a merge decision. The Committer sees the test
  pass at the choice event, while the later failure remains in transit. The
  earlier assignment messages and local status-update events are omitted. The
  latest visible state on the omitted Security lifeline satisfies its guard
  conditions.}
\label{fig:review-trace}
\end{figure}

In Figure~\ref{fig:review-trace}, the required-check result has reached the Committer before
the merge proposal. A subsequent check has already found a
failure, but the corresponding update remains in transit. Therefore,
the latest TestRunner state visible to the Committer is still the
required-check pass. At the decision point, the formula
\(
  \At{\mathsf{TestRunner}}{
    (\mathit{status}\neq\mathsf{failed}\wedge
     \mathit{status}\neq\mathsf{pending})
    \,\Since\,\mathit{status}=\mathsf{passed}}
\)
holds, and the formula
\(
  \AtField{\mathsf{TestRunner}}{\mathit{candidate}}=\mathit{candidate}
\)
identifies the visible testing episode as belonging to the proposed
candidate. The \(\mathsf{pending}\) reset at the start of that episode keeps
the temporal witness within the same episode.

A monitor using a sequential log ordered by send time would see the failure
before the decision and block the merge. The Committer evaluates its guard
from the information causally visible at the decision point. Here, merging
becomes enabled when the required checks pass and remains enabled until a
later failure becomes visible.

\section{Message Sequence Charts}
\label{sec:executions}

We model agent-workflow executions as \emph{message sequence charts}
(MSCs). An MSC consists of \emph{events} distributed over lifelines. Each
event represents one observable step: sending or receiving a message,
executing a local action, or making a branching decision.
Events on the same lifeline are linearly ordered. Messages are represented
by matching send and receive events. Unmatched sends remain in transit in
prefixes. The causal order combines local successors with messages.
Figure~\ref{fig:review-trace} shows a small MSC for the
merge scenario: the Orchestrator sends a merge proposal to the Committer,
the TestRunner sends a result that arrives before the proposal, and a later
update is still in transit at the choice event.

We enrich each event with a \emph{valuation}. The valuation at an event
records the local state of the corresponding lifeline after that event. It
is a mapping from variables to values, for instance the current merge
candidate or the latest test status. The MSC semantics specifies where
formulas are evaluated and which valuations they may inspect. Agent
implementations determine their internal store semantics.

Let \(\Lifelines\) be a finite set of lifelines.

\begin{definition}[Message Sequence Chart (MSC)
  \leanformalized{https://github.com/zippergen-io/zippergen-lean/blob/main/icfem/CPLMonitor/MSC.lean\#L32}]
A \emph{Message Sequence Chart (MSC)} over \(\Lifelines\) is a tuple
\(M=(\Events,\to,\msg,\kind,\pid,\val)\), where \(\Events\) is a finite set of
events, \({\to} \subseteq \Events \times \Events\) is the local successor
relation, \({\msg} \subseteq \Events \times \Events\) is the message relation,
\(
  \kind:\Events\to
  \{\mathsf{act},\mathsf{recv},\mathsf{choice}\}
  \uplus(\{\mathsf{send}\}\times\Lifelines)
\)
classifies events.  A send event has kind \((\mathsf{send},B)\), where
\(B\) is its intended receiver.  The map \(\pid:\Events\to\Lifelines\)
assigns each event to its lifeline, and \(\val\) assigns an event valuation to each event. For
\(A\in\Lifelines\), define
\(
  \Events_A = \{e\in\Events \mid \pid(e)=A\}
\)
and
\(
  {\to_A}  = {\to} \cap (\Events_A\times\Events_A).
\)
The tuple satisfies the following conditions.
\begin{enumerate}[label=(\roman*)]
\item The relation \(\to\) is local: if \(e\to f\), then
\(\pid(e)=\pid(f)\).

\item For each \(A\), the relation \(\to_A\) is the immediate-successor
relation of a finite linear order on \(\Events_A\). We write \(\leq_A\)
for its reflexive transitive closure and \(<_A\) for the corresponding
strict order.

\item The message relation \(\msg\) is a partial matching between send and
receive events. If \(s\msg r\), then
\(\kind(s)=(\mathsf{send},\pid(r))\), \(\kind(r)=\mathsf{recv}\), and
\(\pid(s)\neq\pid(r)\).
Every receive has exactly one matching send, and every send has at most
one matching receive. Unmatched sends are messages still in transit, but
their intended receiver is still given by their event kind. Message delivery
may be non-FIFO.

\item The graph generated by local successor edges and message edges is
acyclic.
\end{enumerate}
\end{definition}

Unless stated otherwise, let \(M\) be an MSC over \(\Lifelines\).

The Lean development formalizes event kinds, message matching, and the
semantic interface used by CPL: lifelines, local order, causal order,
valuations, and the previous-local and latest-visible navigation
operations.

The causal order \(\causal_M\) is the reflexive transitive closure of local
successor edges and message edges.
By acyclicity, \(\causal_M\) is a partial order.
We write \(e \strictcausal_M f\) if \(e \causal_M f\) and \(e\neq f\).

For an event $e$, we write \(\nu_e := \val(e)\).
Thus \(\nu_e(x)\) is the value of variable \(x\) at event \(e\).
One can think of \(\nu_e\) as the local store of the lifeline after event
\(e\). In ZipperGen, actions and receives may update this store, while send
and choice events leave it unchanged. For choice events, \(\nu_e\) is the store on which the
guard is evaluated.

\section{Causal Past Logic}
\label{sec:cpl}

\begin{definition}[CPL syntax
  \leanformalized{https://github.com/zippergen-io/zippergen-lean/blob/main/icfem/CPLMonitor/CPLSyntax.lean\#L11}]
\label{def:cpl-syntax}
\emph{Variable terms} $t$, \emph{atomic conditions} $\alpha$, and \emph{CPL formulas} $\varphi$
are given by the following grammar:
\[
  \begin{array}{rcl}
  t &::=& x \mid \AtField{A}{x} \qquad
  \alpha ::= p(t_1,\ldots,t_n) \\[0.5ex]
  \varphi &::=& \alpha
    \mid \At{A}{\varphi}
    \mid \Prev\varphi
    \mid \varphi_1 \Since \varphi_2 \mid \varphi_1\wedge\varphi_2
    \mid \varphi_1\vee\varphi_2
    \mid \neg\varphi .
  \end{array}
\]
\end{definition}
Here, \(x\) ranges over workflow variables and \(p\) ranges over
application-level predicate symbols.  We use standard mathematical
notation for common predicates.  Thus \(x<10\) abbreviates a unary
predicate on \(x\), \(\mathit{status}=\mathsf{approved}\) abbreviates a
unary predicate on \(\mathit{status}\), and
\(\AtField{\mathsf{TestRunner}}{\mathit{candidate}}=\mathit{candidate}\)
abbreviates a binary equality predicate applied to the two displayed
terms. Fixing any CPL formula \(\chi\), we use the Boolean abbreviations
\(\true := \chi\vee\neg\chi\) and
\(\false := \chi\wedge\neg\chi\).

Moreover, \(\Prev\) and \(\Since\) are the usual previous and since operators
from past-time linear temporal logic, interpreted along the current
lifeline. The operator \(\At{A}{\varphi}\) is the causal part: it evaluates
\(\varphi\) at the latest visible event of lifeline \(A\). The term
\(\AtField{A}{x}\) is different. It reads the value of \(x\) at that event
and can only occur inside atomic conditions.

Let \(e\in\Events\), and let \(A=\pid(e)\). Define the previous local
event, if it exists, by
\(
  \lastloc(e) = \max \{f\in\Events_A \mid f <_A e\}.
\)
For any lifeline \(B\), define the latest visible event of \(B\) from \(e\)
by
\(
  \last_B(e) = \max \{f\in\Events_B \mid f \causal_M e\}.
\)
Both maxima are unique when they exist, since events on a lifeline are
linearly ordered.

For a term \(t\), let \(\sem{t}_M(e)\) denote its value when evaluated at
event \(e\), if defined. Since \(\nu_e\) records the local store after \(e\) and
is total,
\(
  \sem{x}_M(e)=\nu_e(x)
\)
is always defined. If \(\last_A(e)\) exists, then
\(\sem{\AtField{A}{x}}_M(e)=\nu_{\last_A(e)}(x)\). Otherwise, the
cross-lifeline term is undefined. If any term in a
predicate atom or comparison is undefined, then the atom is false. In particular, if no latest visible \(A\)-event exists,
then any atomic condition using \(\AtField{A}{x}\) is false. Thus
\(\AtField{A}{x}\) always reads the latest visible value of \(x\) on
lifeline \(A\). This is the current event itself when \(e\) is on \(A\),
and the latest causally visible \(A\)-event otherwise. Strict movement is
expressed separately with the local previous operator \(\Prev\). For
example, \(\At{A}{\Prev\varphi}\) moves to the latest visible \(A\)-event
and then one local step back.

The operators \(\At{A}{\varphi}\) and \(\AtField{A}{x}\) correspond to the
remote formulas and remote expressions of PT-DTL~\cite{SenVAR04}, adapted
to MSC events. PT-DTL assumes that initial remote values are known
throughout the system. Here, a remote term with no visible event is
undefined and makes its atom false.

The application fixes the interpretation of each predicate symbol \(p\) as
a relation on data values. The term semantics above supplies its arguments.
In particular, an unqualified variable is read from \(\nu_e\), while an
\(\AtField{A}{x}\) term is resolved through the causal structure of the MSC.

\begin{definition}[CPL semantics
  \leanformalized{https://github.com/zippergen-io/zippergen-lean/blob/main/icfem/CPLMonitor/CPLSemantics.lean\#L44}]
\label{def:cpl-semantics}
The satisfaction relation \(M,e\models\varphi\), defined for
\(e\in\Events\), is given inductively as follows. For an atomic condition
\(\alpha=p(t_1,\ldots,t_n)\), we have \(M,e\models\alpha\) iff all
\(v_i=\sem{t_i}_M(e)\) are defined and \(p(v_1,\ldots,v_n)\) holds.
The previous and latest-visible cases are
\begin{align*}
M,e &\models \Prev\varphi
  &&\text{iff } \lastloc(e) \text{ is defined and }
    M,\lastloc(e)\models\varphi,\\
M,e &\models \At{A}{\varphi}
  &&\text{iff } \last_A(e) \text{ is defined and }
    M,\last_A(e)\models\varphi.
\end{align*}
Boolean connectives are interpreted as usual. If \(\lastloc(e)\) is
undefined, then \(\Prev\varphi\) is false. If no latest visible
\(A\)-event exists, then \(\At{A}{\varphi}\) is false. For \(e\in\Events_A\),
we have \(M,e\models \varphi_1\Since\varphi_2\) iff there exists
\(f\in\Events_A\) with \(f\leq_A e\),
\(M,f\models\varphi_2\), and \(M,g\models\varphi_1\) for every
\(g\in\Events_A\) with \(f<_A g\leq_A e\).
\end{definition}

Thus the \(\Since\) navigation is local, while \(\varphi_1\) and \(\varphi_2\) may
themselves contain causal modalities. For example,
\((\mathit{status}\neq\mathsf{failed}\wedge
\mathit{status}\neq\mathsf{pending})\Since
\mathit{status}=\mathsf{passed}\) holds at a TestRunner event \(e\) if the
current candidate episode contains a pass followed only by
\(\mathsf{checking}\) or \(\mathsf{passed}\) events, up to \(e\).

Causal past is a derived notation and is non-strict:
$\Past\varphi
  :=
  \bigvee_{B\in\Lifelines}
    \At{B}{\true\Since\varphi}$.
The lifeline-specific form is
$\PastAt{A}{\varphi}
  :=
  \At{A}{\true\Since\varphi}$.

Thus, \(\PastAt{A}{\varphi}\) holds at \(e\) exactly when some event on
lifeline \(A\) visible from \(e\) satisfies \(\varphi\), and
\(\Past\varphi\) is the disjunction over all lifelines.
For instance, the formula
\(
  \PastAt{\mathsf{TestRunner}}{\mathit{status}=\mathsf{passed}}
\)
holds at an event \(e\) if some event on the TestRunner lifeline visible
from \(e\) satisfies \(\mathit{status}=\mathsf{passed}\).

The abbreviation
\(
  \Seen{A} := \At{A}{\true}
\)
tests whether some event of \(A\) is visible. If the current event itself
is on \(A\), this is true by non-strictness.

\section{Online Monitoring}
\label{sec:monitoring}

At runtime, each guard is evaluated by its owner lifeline. The monitor must therefore track which events are causally visible at the current event. Each lifeline \(A\) maintains a vector clock~\cite{Lamport78,fidge1988timestamps,mattern1989virtual}, which is of the form
\(
  \vc_A : \Lifelines \to \mathbb N.
\)
We use vector clocks after the current event has been processed. Thus,
after processing an event \(e\in\Events_A\), the value \(\vc_A(B)\)
records how many events of \(B\) are in the causal past of
\(e\), including \(e\) itself when \(B=A\). This matches the non-strict
semantics of \(\At{B}{\varphi}\) and \(\AtField{B}{x}\): both refer to
the latest visible \(B\)-event. We let \(e^B_k\) denote the \(k\)-th event on lifeline \(B\)
(where indexing starts at 1).

Let \(\Phi\) be a finite set of CPL formulas, \(\sub(\Phi)\) the set of
all their subformulas, and \(\vars(\Phi)\) the set of variable
names that appear in \emph{cross-lifeline} terms \(\AtField{B}{x}\) in
\(\Phi\). For the workflow under consideration, \(\Phi\) contains
all CPL guards that can be evaluated at runtime. All lifelines use the
same \(\Phi\). Derived shorthands
such as \(\Past_A\) are unfolded to primitive CPL syntax before taking
subformulas. The monitor state is designed to answer three kinds of local
queries: strict previous-local formulas, latest visible formulas on each
lifeline, and latest visible field values used by terms
\(\AtField{B}{x}\). The state follows the knowledge-vector scheme
of~\cite{SenVAR04}. We use full vector clocks, which also serve as event
indices and presence tests in the correctness proof. This is formalized in
the following definition:

\begin{definition}[Monitor state
  \leanformalized{https://github.com/zippergen-io/zippergen-lean/blob/main/icfem/CPLMonitor/MonitorState.lean\#L21}]
\label{def:monitor-state}
At each lifeline \(A\), the monitor maintains:
\begin{itemize}
\item a \emph{vector clock} \(\vc_A : \Lifelines \to \mathbb{N}\).
\item a partial \emph{Boolean latest-value view}
  \(\view_A : \Lifelines\times\sub(\Phi)\rightharpoonup\{0,1\}\).
\item a partial \emph{variable view}
  \(\var_A : \Lifelines\times\vars(\Phi)\rightharpoonup\mathsf{Val}\),
  where \(\mathsf{Val}\) is the domain of stored values.
\item a \emph{local store} \(\sigma_A\) for the current event valuation.
\end{itemize}
It also uses a temporary \emph{previous-local copy} during an event update:
\(\old_A(\psi)\in\{0,1\}\) for each \(\psi\in\sub(\Phi)\).
In a stable state, after an event update has completed, the clock is the
presence test for the two latest-value tables: entries with first
component \(B\) are defined exactly when \(\vc_A(B)>0\). If
\(\vc_A(B)=k>0\), those entries refer to the \(k\)-th event of~\(B\). If
\(\vc_A(B)=0\), they are absent.
\end{definition}

The temporary copy \(\old_A\) is total. At the first
event of \(A\), it is set to \(0\) for every \(\psi\in\sub(\Phi)\).

For a fixed workflow, the Boolean part of the monitor state has size \(O(|\Lifelines|\cdot|\sub(\Phi)|)\) per lifeline, including the previous-local copy \(\old_A\). Variable views add \(O(|\Lifelines|\cdot|\vars(\Phi)|)\) selected data values. The vector-clock components are unbounded counters. In PT-DTL monitoring, the freshness vector contains entries only for processes named by monitored remote expressions and formulas~\cite{SenVAR04}. The same optimization applies here. The unbounded counters agree with the impossibility of deterministic finite-state gossip in unbounded message-passing systems~\cite{BFG2018}. In practice, they are bounded by the lifetime of the run or a timeout.

We reason about a run by fixing an arbitrary linear extension of the causal order of the MSC. Each event is processed by the monitor of its own lifeline when that event occurs. The prefix processed before an event is therefore causally closed: if an event has been processed, all of its causal predecessors have been processed as well. Initially, all clock components are zero and the latest-value tables are empty. We assume that applying \(\mathsf{effect}\) at an event \(e\) produces the MSC valuation \(\val(e)\), and that a receive obtains the monitor snapshot emitted at its matching send.
On each lifeline, the state produced at one event is used at the next.
The first event starts from the initial monitor state.

When lifeline \(A\) creates a new event, the monitor runs Algorithm~\ref{alg:event-update}. For a receive, it first merges the vector clock, Boolean views, and variable views carried by the incoming message. Entries are copied from the message exactly when the message is ahead on a lifeline \(B\). In that case the message clock is positive on \(B\), so the corresponding message entries are defined. The monitor then increments its local clock component, updates the local store, evaluates all subformulas (Algorithm~\ref{alg:formula-eval}), and records the results. All messages carry vector clocks and latest-value views. The algorithm treats user messages and projection-internal control messages uniformly.

The function \(\mathsf{effect}\) accounts for the event's local store update.
In the current ZipperGen runtime, actions and receives may update the store,
while sends and choice events use the identity update.

\begin{lstlisting}[float=!t,language=Monitor,mathescape=true,
  caption={Event update and view propagation.},label={alg:event-update}]
procedure ON_EVENT($A,e$)
  if $\kind(e)=\mathsf{recv}$ then
    $\mu \gets \mathsf{message}(e)$
    for $B\in\Lifelines$ do
      if $\mu.\vc(B)>\vc_A(B)$ then
        for $\psi\in\sub(\Phi)$ do
          $\view_A(B,\psi) \gets \mu.\view(B,\psi)$
        end
        for $x\in\vars(\Phi)$ do
          $\var_A(B,x) \gets \mu.\var(B,x)$
        end
      end
    end
    $\vc_A \gets \max(\vc_A,\mu.\vc)$
  end

  for $\psi\in\sub(\Phi)$ do
    $\old_A(\psi) \gets \view_A(A,\psi)$ if defined, otherwise $0$
  end

  $\vc_A(A) \gets \vc_A(A)+1$
  $\sigma_A \gets \mathsf{effect}(e,\sigma_A)$
  for $x\in\vars(\Phi)$ do
    $\var_A(A,x) \gets \sigma_A(x)$
  end

  for $\psi\in\sub(\Phi)$ do
    $\mathsf{res}_A(\psi) \gets \Eval_A(\psi,e)$
  end
  for $\psi\in\sub(\Phi)$ do
    $\view_A(A,\psi) \gets \mathsf{res}_A(\psi)$
  end

  if $\kind(e)=(\mathsf{send},B)$ then
    emit to $B$ the message $(\mathsf{payload}(e),\vc_A,\view_A,\var_A)$
  end
end
\end{lstlisting}

Algorithm~\ref{alg:formula-eval} computes truth values by structural recursion,
following the standard recurrence scheme for past-time
operators~\cite{HavelundR02}. During an
event update, \(\Eval_A\) reads the current monitor state of \(A\), including
the previous-local snapshot \(\old_A\). If there
is no previous local event, \(\Prev\theta\) is false. If no latest visible
\(B\)-event exists, \(\At{B}{\theta}\) is false. The \(\At{B}{\theta}\) case is non-strict: for \(B=A\) it evaluates \(\theta\) at the current event, and for \(B\neq A\), it first checks whether a \(B\)-event is visible and only then reads \(\view_A(B,\theta)\). A previous \emph{local} event exists when \(\vc_A(A)>1\). We write \((\sigma_A,\var_A)\models\alpha\) for runtime atomic evaluation: unqualified variables are read from \(\sigma_A\), terms \(\AtField{B}{x}\) are read from \(\var_A(B,x)\), and the resulting values are passed to the same application predicate as in the MSC semantics. A missing required field makes the atom false. For example, the Committer evaluates
\(\AtField{\mathsf{TestRunner}}{\mathit{candidate}}=\mathit{candidate}\) by
comparing \(\var_{\mathsf{Committer}}(\mathsf{TestRunner},\mathit{candidate})\)
with its current local candidate.

Suppose \(A\neq B\), and let \(b\) be the latest \(B\)-event visible at
the current event of \(A\). In \(\At{B}{\At{A}{\varphi}}\), the inner
formula refers to the latest \(A\)-event visible from \(b\), if any.
The monitor therefore reads \(\view_A(B,\At{A}{\varphi})\), rather than
evaluating \(\varphi\) at the current \(A\)-event. For
\(\Prev\At{B}{\varphi}\), it reads \(\old_A(\At{B}{\varphi})\), the
value recorded at the previous local event. Newly received information
must not be used to recompute that value.

The Lean evaluator uses \texttt{none} for an absent view and initializes
the previous-local table to false. For states produced by
Algorithm~\ref{alg:event-update}, these conventions are equivalent to the
explicit clock tests in Algorithm~\ref{alg:formula-eval}.

\begin{lstlisting}[float=!t,language=Monitor,mathescape=true,
  caption={Formula evaluation at the current event on lifeline \(A\).},label={alg:formula-eval}]
procedure $\Eval_A(\psi,e)$
  case $\psi$ of
    $\alpha$: return $(\sigma_A,\var_A)\models \alpha$
    $\Prev\theta$:
      return $(\vc_A(A)>1)\wedge\old_A(\theta)$
    $\At{B}{\theta}$:
      if $B=A$ then return $\Eval_A(\theta,e)$
      else if $\vc_A(B)=0$ then return $0$
      else return $\view_A(B,\theta)$
    $\theta_1\Since\theta_2$:
      $\mathit{prev} \gets (\vc_A(A)>1)\wedge\old_A(\theta_1\Since\theta_2)$
      return $\Eval_A(\theta_2,e)\vee(\Eval_A(\theta_1,e)\wedge \mathit{prev})$
    $\neg\theta$: return $\neg\Eval_A(\theta,e)$
    $\theta_1\wedge\theta_2$: return $\Eval_A(\theta_1,e)\wedge\Eval_A(\theta_2,e)$
    $\theta_1\vee\theta_2$: return $\Eval_A(\theta_1,e)\vee\Eval_A(\theta_2,e)$
  end
end
\end{lstlisting}

\begin{definition}[Coherent pre-evaluation state \leanformalized{https://github.com/zippergen-io/zippergen-lean/blob/main/icfem/CPLMonitor/LocalEval.lean\#L29}]
\label{def:coherent-pre-evaluation}
Let \(e\in\Events_A\). The monitor state of \(A\) is
\emph{coherent before evaluating formulas at \(e\)} if the following
conditions hold.
\begin{enumerate}[label=(\roman*)]
\item For every lifeline \(B\),
      \[
        \vc_A(B)=|\{f\in\Events_B \mid f\causal_M e\}|.
      \]
      Thus, if \(\vc_A(B)=k>0\), then the latest visible event of \(B\)
      from \(e\) is \(e^B_k\).
\item For every \(B\neq A\) with \(k=\vc_A(B)>0\), the entries for
      \(B\) are defined and describe \(e^B_k\):
      \[
        \view_A(B,\psi)=1
        \quad\text{iff}\quad
        M,e^B_k\models\psi
      \]
      for all \(\psi\in\sub(\Phi)\), and
      \[
        \var_A(B,x)=\val(e^B_k)(x)
      \]
      for all \(x\in\vars(\Phi)\).  If \(\vc_A(B)=0\), then
      no \(B\)-event is visible and all entries
      \(\view_A(B,\psi)\) and \(\var_A(B,x)\) are absent.
\item For the owner lifeline \(A\), the local store induces the current
      event valuation: for every
      variable \(x\),
      \[
        \sigma_A(x)=\val(e)(x).
      \]
      Moreover,
      \(\var_A(A,x)\) is defined and equal to \(\val(e)(x)\) for every
      \(x\in\vars(\Phi)\).
      No condition is imposed on the owner Boolean entries
      \(\view_A(A,\psi)\) before formula evaluation. They are overwritten
      by Algorithm~\ref{alg:event-update} after the subformulas have been
      evaluated.
\item For every \(\psi\in\sub(\Phi)\), \(\old_A(\psi)\) is the truth
      value of \(\psi\) at the previous
      local event of \(A\), if such an event exists, and is false
      otherwise.
\end{enumerate}
\end{definition}

\begin{lemma}[Local evaluation \leanproof{https://github.com/zippergen-io/zippergen-lean/blob/main/icfem/CPLMonitor/LocalEval.lean\#L111}]
\label{lem:local-evaluation}
Let \(e\in\Events_A\). Suppose the monitor state of \(A\) is coherent
before evaluating formulas at \(e\). Then, for every
\(\psi\in\sub(\Phi)\),
\[
  \Eval_A(\psi,e)=1
  \quad\text{iff}\quad
  M,e\models\psi .
\]
\end{lemma}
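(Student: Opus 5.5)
We prove the statement by structural induction on \(\psi\in\sub(\Phi)\). Since \(\sub(\Phi)\) is closed under subformulas, every recursive call \(\Eval_A(\theta,e)\) made by Algorithm~\ref{alg:formula-eval} concerns a formula in \(\sub(\Phi)\) of smaller size, so the induction hypothesis applies to it. Throughout, \(e\) stays fixed and the monitor state is frozen during the evaluation: \(\Eval_A\) only reads \(\sigma_A\), \(\var_A\), \(\vc_A\), \(\old_A\), and the entries \(\view_A(B,\cdot)\) for \(B\neq A\). Coherence conditions (i)--(iv) describe exactly these components. The Boolean cases \(\neg,\wedge,\vee\) are immediate from the induction hypothesis.

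\textbf{Atoms.} We show that each term denotes the same value under the runtime and the MSC semantics. For an unqualified \(x\), condition (iii) gives \(\sigma_A(x)=\val(e)(x)=\sem{x}_M(e)\). For \(\AtField{B}{x}\), note first that \(x\in\vars(\Phi)\). If \(B=A\), then \(\last_A(e)=e\), and (iii) gives \(\var_A(A,x)=\val(e)(x)\). If \(B\neq A\), condition (i) says that \(\vc_A(B)=k>0\) iff \(\last_B(e)\) exists, and in that case \(\last_B(e)=e^B_k\). Condition (ii) then gives \(\var_A(B,x)=\val(e^B_k)(x)\). If \(k=0\), both the term and the table entry are undefined, and both semantics make the atom false. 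Since the same predicate interpretation is applied to identical argument values, the two evaluations agree.

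\textbf{\(\Prev\) and \(\At{B}{\cdot}\).} By (i), \(\vc_A(A)\) is the index of \(e\) on \(A\). Hence \(\vc_A(A)>1\) iff \(\lastloc(e)\) exists. By (iv), \(\old_A(\theta)\) is the truth value of \(\theta\) at \(\lastloc(e)\) in that case, so the \(\Prev\) case matches the semantics. For \(\At{B}{\theta}\) with \(B=A\), non-strictness gives \(\last_A(e)=e\), and the claim follows from the induction hypothesis for \(\theta\). For \(B\neq A\), we use (i) and (ii) directly. If \(\vc_A(B)=0\), no \(B\)-event is visible and both sides are false. Otherwise, \(\view_A(B,\theta)=1\) iff \(M,e^B_k\models\theta\), and \(e^B_k=\last_B(e)\). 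No induction is needed in this case, since (ii) covers all of \(\sub(\Phi)\).

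\textbf{Since.} This is the step needing a small semantic lemma, namely the standard unfolding: for \(e\in\Events_A\), \(M,e\models\theta_1\Since\theta_2\) iff either \(M,e\models\theta_2\), or both \(M,e\models\theta_1\) and \(\lastloc(e)\) exists with \(M,\lastloc(e)\models\theta_1\Since\theta_2\). We prove it by cases on the witness \(f\). If \(f=e\), the first disjunct holds. If \(f<_A e\), then \(f\leq_A\lastloc(e)\), and the same witness works at \(\lastloc(e)\). The converse extends the witness from \(\lastloc(e)\) by the event \(e\). With this lemma, (iv) applied to \(\theta_1\Since\theta_2\) identifies \(\mathit{prev}\) with the right-hand disjunct. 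The induction hypothesis handles \(\theta_1\) and \(\theta_2\) at \(e\). The main obstacle is not this computation but keeping track of which state components each case reads. In particular, the \(\At{B}{\cdot}\) case for \(B\neq A\) must rely only on (ii) and never on the unconstrained owner entries \(\view_A(A,\cdot)\). The \(\Prev\) and \(\Since\) cases must use \(\old_A\), which (iv) fixes independently of newly merged information.
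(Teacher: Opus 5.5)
Your proof is correct and follows the paper's route: structural induction on \(\psi\), with coherence conditions (i)--(iv) settling the atom, \(\Prev\), and \(\At{B}{\theta}\) cases, and the \(\Since\) case settled by the standard recurrence \(\theta_1\Since\theta_2\equiv\theta_2\vee(\theta_1\wedge\Prev(\theta_1\Since\theta_2))\) together with \(\old_A\). The only difference is that you derive this unfolding explicitly from the semantics, whereas the paper just invokes it as standard.
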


\begin{proof}
The proof is by structural induction on \(\psi\).

\smallskip\noindent\textit{Atomic condition \(\alpha\).}
The runtime evaluates \(\alpha\) against the current store \(\sigma_A\)
and the latest-visible variable views \(\var_A(B,x)\).  Unqualified
variables are read from \(\sigma_A\), which agrees with \(\val(e)\) by
coherence.  For \(\AtField{A}{x}\), coherence gives the value of \(x\)
at \(\last_A(e)=e\).  For \(\AtField{B}{x}\) with \(B\neq A\),
coherence gives the value of \(x\) at \(e^B_{\vc_A(B)}=\last_B(e)\) when
a visible \(B\)-event exists. If none exists, both the runtime and the
semantics make the atom false. Thus every term has the same definedness and
value in both evaluations. Both apply the same interpretation of the
predicate symbol, so their truth values agree.

\smallskip\noindent\textit{Boolean connectives.}
The cases for \(\neg\theta\), \(\theta_1\wedge\theta_2\), and
\(\theta_1\vee\theta_2\) follow directly from the induction
hypothesis.

\smallskip\noindent\textit{Previous: \(\Prev\theta\).}
If \(\vc_A(A)=1\), then \(e\) is the first local event of \(A\), so
there is no previous local event and both sides are false.  Otherwise,
coherence says that \(\old_A(\theta)\) is the truth value of \(\theta\)
at the previous local event, which is exactly the semantics of \(\Prev\).

\smallskip\noindent\textit{At-operator: \(\At{B}{\theta}\).}
If \(B=A\), non-strictness gives \(\last_A(e)=e\), and the algorithm
recursively evaluates \(\theta\) at the current event. The result is
correct by induction. If \(B\neq A\), the algorithm reads the
latest-value view \(\view_A(B,\theta)\), which is correct by coherence.
If no \(B\)-event is visible, both sides are false.

\smallskip\noindent\textit{Since: \(\theta_1\Since\theta_2\).}
The algorithm uses the standard recurrence for past-time since over the
local order of \(A\), i.e.,
\(
  \theta_1\Since\theta_2
  \equiv
  \theta_2 \vee \bigl(\theta_1\wedge \Prev(\theta_1\Since\theta_2)\bigr)
\).
The current values of \(\theta_1\) and \(\theta_2\) are correct by the
induction hypothesis, and the previous value of the since formula is
provided by \(\old_A\).
\qed
\end{proof}

\begin{lemma}[Monitor-state invariants \leanproof{https://github.com/zippergen-io/zippergen-lean/blob/main/icfem/CPLMonitor/EventUpdate.lean\#L1067}]
\label{lem:monitor-invariants}
After Algorithm~\ref{alg:event-update} has processed an event \(e\in\Events_A\),
the following hold for every lifeline \(B\):
\begin{enumerate}[label=(\roman*)]
\item \(\vc_A(B) = |\{f\in\Events_B \mid f \causal_M e\}|\).
\item If \(\vc_A(B)=0\), then all entries
  \(\var_A(B,x)\) and \(\view_A(B,\psi)\) are absent.
\item If \(\vc_A(B)=k>0\), then \(\var_A(B,x)\) is defined and equal to
  \(\val(e^B_k)(x)\), for every \(x\in\vars(\Phi)\).
\item If \(\vc_A(B)=k>0\), then \(\view_A(B,\psi)\) is defined and
  \[
    \view_A(B,\psi)=1
    \quad\text{iff}\quad
    M,e^B_k\models\psi
  \]
  for every \(\psi\in\sub(\Phi)\).
\end{enumerate}
\end{lemma}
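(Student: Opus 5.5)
We prove the invariants by induction over the run, processed along the fixed linear extension of \(\causal_M\). The induction hypothesis states that (i)--(iv) hold for the stable state produced at every already processed event. This covers the previous local event \(e'=\lastloc(e)\) of \(A\), if it exists, and, when \(e\) is a receive with matching send \(s\msg e\), the event \(s\). By causal closure of the processed prefix, \(s\) has been processed, and the message carries exactly the stable state of \(\pid(s)\) after \(s\). The initial state (all clocks zero, empty tables) plays the role of the hypothesis when \(e\) is the first event of \(A\). Items (i) and (ii) then hold vacuously, since no event is visible.

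The core step is to show that, after the merge phase and the store update but before formula evaluation, the state of \(A\) is coherent at \(e\) in the sense of Definition~\ref{def:coherent-pre-evaluation}.

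For clause (i), we use the standard vector-clock argument. The causal past of \(e\) is \(\{e\}\cup{\downarrow}e'\cup{\downarrow}s\), where \({\downarrow}\) denotes the down-closure under \(\causal_M\). Each of these sets intersects \(\Events_B\) in a prefix of the \(B\)-line. Hence the number of visible \(B\)-events is the maximum of the two prefix lengths, plus one for \(B=A\). This is exactly what the componentwise \(\max\) followed by the increment computes.

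For clause (ii), fix \(B\neq A\). If \(\mu.\vc(B)>\vc_A(B)\), the entries are copied from the message. By the hypothesis at \(s\), they describe \(e^B_{\mu.\vc(B)}\), which is the new \(e^B_k\). Otherwise the old entries are kept. They describe \(e^B_{\vc_A(B)}\) by the hypothesis at \(e'\), and the new maximum is unchanged. Definedness follows the same case split, which gives absence exactly when \(k=0\).

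Clause (iii) follows from the assumption that \(\mathsf{effect}\) yields \(\val(e)\) from \(\sigma_A=\val(e')\), together with the explicit assignment \(\var_A(A,x)\gets\sigma_A(x)\). For \(e'\) absent, we use the initial store.

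Clause (iv) holds because \(\old_A\) is filled from \(\view_A(A,\cdot)\) before the increment. That entry was not overwritten by the merge: no message can be ahead on \(A\)'s own component, since \(s\causal_M e\) forces \(\mu.\vc(A)\le\vc_A(A)\) by (i) at \(s\) and \(e'\). By (iv) of the hypothesis at \(e'\), this entry records truth at \(e'\), and it is undefined, hence \(0\), at the first event.

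Once coherence is established, Lemma~\ref{lem:local-evaluation} yields \(\mathsf{res}_A(\psi)=1\) iff \(M,e\models\psi\). Writing these results into \(\view_A(A,\cdot)\) establishes (iv) for \(B=A\), since \(e^A_{\vc_A(A)}=e\). Item (iii) for \(B=A\) is clause (iii) of coherence. Items (ii)--(iv) for \(B\neq A\) are unchanged by the remainder of the algorithm, and item (i) was established above.

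The main obstacle is the bookkeeping in the merge step. First, we must justify that the message is never ahead on \(A\) itself, so \(A\)'s previous entries survive into \(\old_A\). Second, we must check that copying entries whenever \(\mu.\vc(B)>\vc_A(B)\) yields entries about the correct event \(e^B_k\). The second point relies on the fact that a fixed index \(k\) names the same \(B\)-event from every observer, so entries agreeing on the index describe the same event. I would isolate it as a separate lemma about prefix-closed \(B\)-projections of causal pasts.
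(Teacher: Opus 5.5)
Your proposal is correct and follows essentially the same route as the paper: induction along the fixed linear extension, the prefix-and-maximum argument for the clocks, and the case split on whether the sender is strictly ahead on a lifeline. It also uses the same acyclicity bound \(\mu.\vc(A)\le\vc_A(A)\) to keep the owner row intact for \(\old_A\), and then applies Lemma~\ref{lem:local-evaluation} to the resulting coherent state. The only difference is organizational: you treat receive and non-receive events uniformly via \({\downarrow}e=\{e\}\cup{\downarrow}e'\cup{\downarrow}s\) and structure the argument by the clauses of coherence, whereas the paper separates the two event kinds and checks items (i)--(iv) in turn.
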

\begin{proof}
Fix a linear extension of the causal order. We proceed by induction over this order.
Equivalently, the induction hypothesis is available for the causally
closed prefix processed before the current event.

\smallskip\noindent\textit{Base case.}
Before any event is processed, \(\vc_A(B)=0\) for all \(B\).
No event is yet causally visible, so (i) holds.
The latest-value tables are empty, so (ii) holds. Conditions (iii) and (iv) are
irrelevant since no clock component is positive.

\smallskip\noindent\textit{Non-receive event \(e\) on \(A\).}
This covers act, send, and choice events.
The algorithm increments \(\vc_A(A)\) by one and leaves all other
clock components unchanged.

\begin{itemize}
\item[(i)] This matches the causal past: since \(e\) is not a receive,
  the only newly visible event is \(e\) itself, on lifeline \(A\).
\item[(ii)] No entry becomes absent when processing an event. For
  lifelines whose clock remains zero, the tables remain empty.

\item[(iii)] The algorithm writes \(\var_A(A,x)\gets\val(e)(x)\) from the
  current local store, which equals \(\val(e^A_{\vc_A(A)})(x)\) since
  \(e^A_{\vc_A(A)}=e\).
  For \(B\neq A\), \(\var_A(B,x)\) is unchanged and no new \(B\)-event became
  visible, so the stored value still refers to the correct event by induction.
\item[(iv)] Algorithm~\ref{alg:event-update}
  overwrites \(\view_A(A,\psi)\) with the value of \(\psi\) at \(e\).
  The monitor state is coherent
  before formula evaluation: (i) gives the clock condition, remote views are
  unchanged and correct by induction, the local store has just been updated
  to induce \(\val(e)\), local variable views have just been updated to
  \(\val(e)\), and \(\old_A\) contains the previous local truth values.
  Hence Lemma~\ref{lem:local-evaluation} gives the correctness of the new
  local Boolean entries.
  For \(B\neq A\), \(\view_A(B,\psi)\) is unchanged.
\end{itemize}

\smallskip\noindent\textit{Receive event \(e\) on \(A\), matching send \(s\) on \(B\).}
The message carries \(\vc_B^s\), \(\view_B^s\), \(\var_B^s\), the monitor
state of \(B\) immediately after \(s\).
By induction, all four invariants hold for \(B\) at \(s\).
Since \(s\strictcausal_M e\), every event causally visible to \(B\) at \(s\)
is also causally visible to \(A\) at \(e\).
Write \(\vc_A^{-}\), \(\view_A^{-}\), and \(\var_A^{-}\) for the monitor
state on \(A\) just before the receive merge.

The algorithm sets \(\vc_A(C)\gets\max(\vc_A(C),\vc_B^s(C))\) for each \(C\),
then increments \(\vc_A(A)\).

\begin{itemize}
\item[(i)] For \(C\neq A\): the only new non-local causal information at
  the receive event comes through the matching message edge \(s\msg e\).
  Hence a \(C\)-event visible at \(e\) is either already visible to \(A\)
  before the receive, or visible to \(B\) at the send event \(s\).
  By induction, the two counts are \(\vc_A^{-}(C)\) and \(\vc_B^s(C)\).
  Both sets of \(C\)-events are prefixes of the local order on \(C\).
  Therefore their union is the longer prefix, whose size is the maximum
  of the two sizes.

  For \(C=A\): any \(A\)-event visible to \(B\) at \(s\) is already a strict
  local predecessor of \(e\). Indeed, if \(e\leq_A a\), then
  \(a\causal_M s\strictcausal_M e\causal_M a\), a causal cycle. Hence
  \(\vc_B^s(A)\leq\vc_A^{-}(A)\).  The merge therefore
  does not change the \(A\)-component, and the final increment accounts for
  \(e\) itself.
\item[(ii)] If the resulting clock on \(C\) is zero, then both incoming
  snapshots have clock zero on \(C\), so both have absent entries by
  induction and the result remains absent.

\item[(iii)] The algorithm copies \(\var_B^s(C,x)\) into \(\var_A(C,x)\) exactly
  when \(\vc_B^s(C)>\vc_A^{-}(C)\), i.e., when the sender is strictly ahead on \(C\).
  By induction, we have \(\var_B^s(C,x)=\val(e^C_{\vc_B^s(C)})(x)\), so the copy is
  correct.
  When \(\vc_B^s(C)\leq\vc_A^{-}(C)\), the local view is already at least as
  recent as the sender's view, so no copy is needed.
  The local entry \(\var_A(A,x)\) is set to \(\val(e)(x)\) as in the
  non-receive case.
\item[(iv)] The same merge applies to \(\view_A(C,\psi)\): when
  \(\vc_B^s(C)>\vc_A^{-}(C)\), \(\view_B^s(C,\psi)\) is copied.
  By induction, we have $\view_B^s(C,\psi)=1$ iff $M,e^C_{\vc_B^s(C)}\models\psi$.
  For \(C\neq A\), after the vector-clock merge, the stored table entry
  refers to the correct latest visible event of \(C\).
  The inequality \(\vc_B^s(A)\leq\vc_A^{-}(A)\) established in (i) also
  ensures that the merge leaves the owner entries \(\view_A(A,\psi)\)
  unchanged. Hence the subsequent assignment to \(\old_A\) records the
  truth values at the previous local event.
  After the merge and the local updates of Algorithm~\ref{alg:event-update},
  the monitor state is coherent, so Lemma~\ref{lem:local-evaluation}
  applies as in the non-receive case.
\end{itemize}
This concludes the proof of Lemma~\ref{lem:monitor-invariants}.
\qed
\end{proof}

Suppose two messages from the same sender reach a receiver in reverse send
order. Every event visible at the earlier send is also visible at the later
send. By Lemma~\ref{lem:monitor-invariants}(i), the later snapshot therefore
has at least the same value in every clock component. Once it has been
merged, the older snapshot cannot replace any remote row. The older message
still causes a local receive event: the monitor increments its own clock,
applies the local store update, and evaluates the formulas.

For an event \(e\in\Events_A\), let \(\mathsf{res}_A^e(\varphi)\)
denote the value computed for \(\varphi\) by
Algorithm~\ref{alg:event-update} during the update for \(e\).
The theorem applies to individual executions. Agreement on branch outcomes
at the workflow level follows from ZipperGen's projection discipline.

\begin{theorem}[Monitor correctness \leanproof{https://github.com/zippergen-io/zippergen-lean/blob/main/icfem/CPLMonitor/EventUpdate.lean\#L1133}]
Let \(M\) be an MSC, fix any linear extension of its causal order, and
run the distributed monitors along this order. If \(e\in\Events_A\) is
an event on lifeline \(A\), then immediately after the monitor on \(A\)
has processed \(e\), every formula \(\varphi\in\Phi\) satisfies
\[
  \mathsf{res}_A^e(\varphi) = 1
  \quad\text{iff}\quad
  M,e \models \varphi.
\]
\end{theorem}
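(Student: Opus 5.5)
The plan is to derive the theorem from Lemma~\ref{lem:local-evaluation} by exhibiting, at the moment Algorithm~\ref{alg:event-update} evaluates subformulas for \(e\), a monitor state of \(A\) that is coherent in the sense of Definition~\ref{def:coherent-pre-evaluation}. Since \(\Phi\subseteq\sub(\Phi)\) and \(\mathsf{res}_A^e(\varphi)=\Eval_A(\varphi,e)\), the lemma then gives \(\mathsf{res}_A^e(\varphi)=1\) iff \(M,e\models\varphi\) for every \(\varphi\in\Phi\). All coherence facts will come from Lemma~\ref{lem:monitor-invariants}, applied to the state of \(A\) after its previous local event \(\lastloc(e)\) and, for a receive, to the state of the sender after the matching send \(s\). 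Both events precede \(e\) in the fixed linear extension, so the invariants are available for them.

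The first step is to check conditions (i) and (ii) of coherence. For the clock, this is the argument from part (i) of the proof of Lemma~\ref{lem:monitor-invariants}. For a non-receive event, the causal past of \(e\) consists of the causal past of \(\lastloc(e)\) together with \(e\) itself. For a receive, the causal past is additionally enlarged by the causal past of \(s\). On each lifeline these pasts are prefixes, so componentwise maximum followed by the increment of the \(A\)-component yields the counts required by coherence. For the remote rows \(B\neq A\), an entry is either inherited unchanged from \(\lastloc(e)\), or it is copied from the message exactly when the sender is strictly ahead on \(B\). In each case the entry describes \(e^B_{\vc_A(B)}\) by Lemma~\ref{lem:monitor-invariants}(iii),(iv) at the source event. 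Absence when the clock is zero follows from part (ii) of that lemma.

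The second step is condition (iii). It follows from the assumption that \(\mathsf{effect}\) produces \(\val(e)\), together with the explicit assignment \(\var_A(A,x)\gets\sigma_A(x)\) in the algorithm.

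The step I expect to need the most care is condition (iv), the correctness of \(\old_A\). The table \(\old_A\) is read from \(\view_A(A,\cdot)\) after the receive merge, so I must show that the merge did not overwrite the owner row. I would reuse the acyclicity argument, which gives \(\vc^s_B(A)\leq\vc_A^{-}(A)\). Hence the copy guard \(\mu.\vc(A)>\vc_A(A)\) fails for \(A\), and \(\view_A(A,\cdot)\) still holds the values written at \(\lastloc(e)\). By Lemma~\ref{lem:monitor-invariants}(iv), these are the true values at \(e^A_{\vc_A(A)}=\lastloc(e)\). If \(e\) is the first event of \(A\), the clock component is \(0\), the row is absent, and \(\old_A\) is set to \(0\), as condition (iv) requires.

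With coherence established, Lemma~\ref{lem:local-evaluation} finishes the proof.
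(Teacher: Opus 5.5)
Your argument is correct, but it takes a longer route than the paper. You rebuild coherence at \(e\) from the invariants at \(\lastloc(e)\) and at the matching send, and then apply Lemma~\ref{lem:local-evaluation}. That is exactly the inductive step already carried out inside the proof of Lemma~\ref{lem:monitor-invariants}.

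The paper instead applies Lemma~\ref{lem:monitor-invariants} at \(e\) itself:
part (i) gives \(e^A_{\vc_A(A)}=e\);
part (iv) with \(B=A\) gives \(\view_A(A,\varphi)=1\) iff \(M,e\models\varphi\);
and since the update ends by storing \(\mathsf{res}_A(\varphi)\) into \(\view_A(A,\varphi)\), the claim follows in three lines.

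Your version ties \(\mathsf{res}_A^e\) directly to \(\Eval_A\) and uses invariants only at strictly earlier events, so non-circularity is explicit. The paper's version gets the theorem as a corollary by reusing work already packaged in the lemma. Your handling of the owner row via \(\vc^s_B(A)\le\vc_A^{-}(A)\) and of the first-event case for \(\old_A\) is sound. One small point: for the first event of \(A\), say explicitly that the state before \(e\) is the initial one (all-zero clock, empty tables), since you only invoke ``the state after \(\lastloc(e)\)'' when establishing (i) and (ii).
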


\begin{proof}
After processing \(e\), the algorithm has copied
\(\mathsf{res}_A(\varphi)\) into \(\view_A(A,\varphi)\).  By
Lemma~\ref{lem:monitor-invariants}(i), \(\vc_A(A)\) is the local index of
\(e\), so \(e^A_{\vc_A(A)}=e\).  Applying
Lemma~\ref{lem:monitor-invariants}(iv) with \(B=A\), we get
\(
  \view_A(A,\varphi)=1
  \text{ iff }
  M,e\models\varphi .
\)
Since \(\mathsf{res}_A^e(\varphi)=\view_A(A,\varphi)\) immediately after
the update, the claim follows.
\qed
\end{proof}

\section{Implementation and Preliminary Evaluation}
\label{sec:evaluation}

\paragraph{Implementation.}
CPL guards are implemented in a ZipperGen prototype\footnote{\url{https://zippergen.io}},
an open-source Python framework for multi-agent LLM coordination.
Guards are Python expressions using \texttt{At[B].x} for variable
terms, \texttt{Here.x} for the store at the current evaluation point, \texttt{since} for the
past-time condition, and standard Boolean connectives. The prototype keeps
the monitor state locally, sends vector clocks and latest-value views with
each message, and evaluates guards at owner-side choice events, before
executing the selected continuation. It retains the fields that occur in
cross-lifeline terms and evaluates each registered formula object once per
event in bottom-up order. For the benchmark formulas, the registered object
counts coincide with the set-based subformula counts. Listing~\ref{lst:impl} shows the
complete code-review guard and its use in a ZipperGen branch.

\paragraph{Monitor microbenchmark.}

We measured the time and message data used by the ZipperGen monitor.\footnote{Artifacts
and exact ZipperGen revisions: \href{https://github.com/zippergen-io/zippergen/tree/46b41e4c49ee91b734794a51300331813de70744/experiments/icfem-2026-cpl}{public repository}.}
The first configuration uses the four lifelines and the complete merge guard from
Section~\ref{sec:overview}: 16 subformulas and
\(\vars(\Phi)=\{\mathit{candidate}\}\).  The status tests use formula views.
Only \(\mathit{candidate}\) requires a variable view.
For each local status test, the benchmark reads \texttt{status} directly from
the local store instead of using the \texttt{Here.status} comparison shown in
Listing~\ref{lst:impl}. Because the local store is total, the two forms give
the same result, but they need not take the same time.

For the scaling cases, with \(1 \le k \le n\), a monitor on \(L_{\ell-1}\) stores view entries for
\(L_0,\ldots,L_{\ell-1}\) and evaluates
\(\Psi_{n,k}=\bigwedge_{i=0}^{n-1}
  \bigl(\AtField{L_0}{f_{i\bmod k}}=i\bigr)\).

The conjunction follows the structure of the merge guard, whose checks must all
hold.  Using disjunction would give essentially the same timings because the
monitor visits every subformula.  The field names repeat cyclically, while the
constants distinguish the atoms.  Hence \(\Psi_{n,k}\) has \(n\) atoms,
\(2n-1\) subformulas, and \(k\) variable names.  The case \(k=128\)
uses \(n=128\).

The three operations are timed separately. Before each measurement, the monitor
contains a stored value for every formula and tracked variable on every
lifeline. The local case processes an action event. The send case processes
a send event and copies the vector clock and the two views that the runtime
would attach to an outgoing message. The receive case first advances the synthetic
incoming clock entries for \(L_0,\ldots,L_{\ell-2}\), then passes the metadata to
\(L_{\ell-1}\). Its views contain the corresponding rows, so the monitor replaces
every remote row. The receive time includes these clock updates.
The timings cover only the monitor updates described above. Encoding and
transmitting messages, storing runtime state, and processing application
payloads are not timed.

For each operation, we discard 50 warm-up events and report the median of the
mean event times in nine batches of 300. Garbage collection is disabled during
measurement. Message size is the UTF-8 length of the JSON-encoded vector clock
and views, with one KB equal to 1,000 bytes. The experiments used CPython 3.12
on an Intel Core i9-9880H running macOS. The timings were stable: the
interquartile range for every recorded row was below 2.7\% of its median.

\begin{table}[t]
\caption{Time per monitor event and monitor data per message.}
\label{tab:microbenchmark}
\centering
\scriptsize
\renewcommand{\arraystretch}{1.08}
\setlength{\tabcolsep}{2.8pt}
\setlength{\aboverulesep}{.2ex}
\setlength{\belowrulesep}{.2ex}
\begin{tabular}{@{}lrrrrrrr@{}}
\toprule
Workload & $\ell$ & \shortstack{Atoms\\$n$}
 & \shortstack{Variables\\$k$} & \shortstack{Local\\(ms)}
 & \shortstack{Send\\(ms)} & \shortstack{Receive\\(ms)}
 & \shortstack{Message data\\(KB)}\\
\midrule
Code review & 4 & -- & 1 & 0.047 & 0.069 & 0.072 & 1.2\\
Scaling $\Psi_{n,k}$ & 2 & 16 & 4 & 0.135 & 0.158 & 0.154 & 1.1\\
Scaling $\Psi_{n,k}$ & 32 & 16 & 4 & 0.209 & 0.506 & 0.557 & 17.0\\
Scaling $\Psi_{n,k}$ & 8 & 4 & 4 & 0.057 & 0.095 & 0.102 & 1.9\\
Scaling $\Psi_{n,k}$ & 8 & 64 & 4 & 0.521 & 0.754 & 0.764 & 14.1\\
Scaling $\Psi_{n,k}$ & 8 & 128 & 1 & 0.998 & 1.430 & 1.448 & 27.3\\
Scaling $\Psi_{n,k}$ & 8 & 128 & 128 & 1.976 & 2.979 & 2.903 & 58.0\\
\bottomrule
\end{tabular}
\end{table}

The code-review guard takes about 0.05--0.07 ms per event and carries 1.2 KB.
Receive time rises from 0.10 to 0.76 ms as \(n\) grows from 4 to 64. With
\(n=128\), increasing \(k\) from 1 to 128 roughly doubles receive time and
message data. Increasing \(\ell\) from 2 to 32 raises message size from 1.1 to
17.0 KB.

\paragraph{Workflow generation.}
We asked Codex CLI 0.153.4 with \texttt{gpt-6-astra}, at low reasoning effort,
to generate two workflows from short prose specifications, with three runs per
task. Each run started from a fresh project created with
\texttt{zippergen init}. The agent then read the standard skill, ZipperGen's
guide for coding agents. The guide shows a
remote-field example but no temporal operator. We gave no further explanation
of CPL and no reference guard. In the first task, \(L_3\) reported a coin outcome stored at
\(L_0\), while the application payloads contained only an opaque token along
\(L_0\to L_1\to L_2\to L_3\). All three runs generated
\texttt{(At[L0].outcome == "heads") @ L3}, passed validation, and returned the
expected result for both outcomes.

In the second task, \(L_0\) recorded three status updates before the same chain.
A pass remained valid through \(\mathsf{checking}\), a failure cancelled it,
and a later pass restored it. The agent had to recognize this historical
condition and place \texttt{since} under the remote operator. In runs 1 and 2,
it listed the names exported by \texttt{zippergen} and read the \texttt{since}
docstring using \texttt{inspect.getdoc}. The third run first wrote
\texttt{Since}. Validation rejected that name and suggested \texttt{since}.
The same run then placed \(\Since\) outside \(\At{L_0}{(\cdot)}\), thereby
evaluating it along \(L_3\)'s history. The agent's generated tests exposed this
second error, and the agent corrected it. All three final workflows used
\(\At{L_0}{(\mathit{status}\neq\mathsf{failed}\;S\;
\mathit{status}=\mathsf{passed})}\), passed validation, and returned the
expected result for all 27 three-update histories. Together, the generated
workflows illustrate remote-field access and the evaluation of a local
\(\Since\)-formula at a remote lifeline. Episode identity is not tested.

\section{Related Work}
\label{sec:related}

Runtime verification is usually defined over sequential
traces~\cite{LeuckerS09,BauerLS11,BartocciFFR18}. Java-MOP, AspectJ-based
temporal assertions, and Control-Flow Temporal Logic connect specifications
more closely with source code and control flow~\cite{ChenR05,StolzB06,DawesR19}.
Runtime enforcement may prevent violations~\cite{Schneider00}.

Recent runtime checks for LLM-agent systems inspect action or input-output
histories~\cite{WangPS2025,PaduraruBS26,ZhangSELD25,KamathZXUSM25,
AlamdariKM26}. CPL guards instead depend on the causal history visible to the
agent making the decision.

PT-DTL~\cite{SenVAR04} is the closest predecessor. Its remote primitive, inspired
in part by the message-passing logic of Meenakshi and
Ramanujam~\cite{MeenakshiR00}, lets a process refer to formulas and expressions
at the latest causally known state of another process. Its
knowledge-vector monitor carries the required values with application messages.
Scheffel and Schmitz later extend this approach with three-valued monitoring
and give a correctness proof for a formula-only variant of PT-DTL~\cite{ScheffelS14}.
CPL adopts this approach for valued MSCs and
follows Sen et al. in carrying both formula and selected variable values.
Scheffel and Schmitz use the first state for an unseen process, while CPL leaves its
view absent. Lemma~\ref{lem:monitor-invariants} and its Lean proof cover this
setting, with clock entries represented by local event indices.

Singh also places temporal guards on events in multiagent
workflows~\cite{Singh03}, while DistAlgo incrementally
maintains conditions over sent and received messages~\cite{LiuSL17}. CPL also
uses temporal information to guide control flow, but its guards follow PT-DTL's
latest-visible formula and value semantics.

ZipperGen~\cite{BolligFN2026} draws on choreographic programming and
multiparty session types~\cite{HondaYC16,CarboneM13,Montesi2023,ShenKK23,
HirschG22}. This setting also includes projected assertions and runtime
monitoring~\cite{BocchiHTY10,ChenBDHY11,BocchiCDHY17,BurloFS21,HeuvelPD23}.
For example, Scribble/Python monitors endpoint conversations and allows them
to be interrupted~\cite{DemangeonHHNY15}, while Pact studies choices in agent
workflows~\cite{GopinathanFNTB26}. In reversible and adaptive choreographies,
runtime conditions serve a different purpose: they trigger rollback or
replacement~\cite{FrancalanzaMT20,CoppoDV15,DallaPredaGGLM17}.

We use MSCs as our causal model~\cite{itu-msc,AlurY99}. Logics over MSCs and
related traces include~\cite{AdsulGKW24,Mazurkiewicz86,BolligKM2010,BolligFG21}.
Past-CTL gives the
knowable causal past an event-structure semantics and translates it
to distributed field-calculus monitors~\cite{AudritoDSTV22}. Its
modalities quantify over causal paths. A survey of distributed runtime
verification~\cite{FrancalanzaPS18} covers decentralized LTL and
knowledge-based monitoring~\cite{BauerF16,MostafaB15,GrafPQ11}. Vector clocks
also appear in message-sequence enforcement and causal monitoring of
multithreaded programs~\cite{SamadiGK23,LeonHL14,SenVAR06}.

\section{Conclusion}
\label{sec:conclusion}

We integrated PT-DTL-style past-time operators over visible causal history
into workflow guards for choices owned by one lifeline. We proved the monitor correct
for the MSC semantics and mechanized the proof in Lean. Future work will study
more workflows and guard generation from natural-language specifications.

\paragraph{Use of AI assistance.}
AI coding agents (OpenAI Codex and Anthropic Claude) assisted with the Lean
formalization, experiments, and manuscript editing. The author developed and
takes full responsibility for the work. Lean kernel-checks the formalization
without \texttt{sorry} or custom axioms. The experiment artifacts are public.

\paragraph{Acknowledgments.}
This work was partly supported by ``France 2030'', managed by ANR
(ANR-23-PEIA-0006). We thank the anonymous reviewers for comments that
substantially improved the paper.

\bibliographystyle{splncs04-no-url}
\bibliography{lit}

\begin{thebibliography}{10}
\providecommand{\url}[1]{\texttt{#1}}
\providecommand{\urlprefix}{URL }
\providecommand{\doi}[1]{https://doi.org/#1}

\bibitem{AdsulGKW24}
Adsul, B., Gastin, P., Kulkarni, S., Weil, P.: An expressively complete local
  past propositional dynamic logic over {Mazurkiewicz} traces and its
  applications. In: Sobocinski, P., Dal~Lago, U., Esparza, J. (eds.)
  Proceedings of the 39th Annual {ACM/IEEE} Symposium on Logic in Computer
  Science, {LICS} 2024, Tallinn, Estonia, July 8-11, 2024. pp. 2:1--2:13. {ACM}
  (2024)

\bibitem{AlamdariKM26}
Alamdari, P.A., Klassen, T.Q., McIlraith, S.A.: Formal methods meet {LLM}s:
  Auditing, monitoring, and intervention for compliance of advanced {AI}
  systems. In: Proceedings of the 2026 {ACM} Conference on Fairness,
  Accountability, and Transparency ({FAccT} 2026). pp. 4162--4198. ACM,
  Montreal, QC, Canada (Jun 2026)

\bibitem{AlurY99}
Alur, R., Yannakakis, M.: Model checking of message sequence charts. In:
  Baeten, J.C.M., Mauw, S. (eds.) {CONCUR} '99: Concurrency Theory, 10th
  International Conference, Eindhoven, The Netherlands, August 24-27, 1999,
  Proceedings. pp. 114--129. Lecture Notes in Computer Science, Springer (1999)

\bibitem{AudritoDSTV22}
Audrito, G., Damiani, F., Stolz, V., Torta, G., Viroli, M.: Distributed runtime
  verification by past-{CTL} and the field calculus. J. Syst. Softw.
  \textbf{187},  111251 (2022)

\bibitem{BartocciFFR18}
Bartocci, E., Falcone, Y., Francalanza, A., Reger, G.: Introduction to runtime
  verification. In: Bartocci, E., Falcone, Y. (eds.) Lectures on Runtime
  Verification - Introductory and Advanced Topics, pp. 1--33. Lecture Notes in
  Computer Science, Springer (2018)

\bibitem{BauerF16}
Bauer, A., Falcone, Y.: Decentralised {LTL} monitoring. Formal Methods Syst.
  Des.  \textbf{48}(1-2),  46--93 (2016)

\bibitem{BauerLS11}
Bauer, A., Leucker, M., Schallhart, C.: Runtime verification for {LTL} and
  {TLTL}. {ACM} Trans. Softw. Eng. Methodol.  \textbf{20}(4),  14:1--14:64
  (2011)

\bibitem{BocchiCDHY17}
Bocchi, L., Chen, T., Demangeon, R., Honda, K., Yoshida, N.: Monitoring
  networks through multiparty session types. Theor. Comput. Sci.  \textbf{669},
   33--58 (2017)

\bibitem{BocchiHTY10}
Bocchi, L., Honda, K., Tuosto, E., Yoshida, N.: A theory of design-by-contract
  for distributed multiparty interactions. In: Gastin, P., Laroussinie, F.
  (eds.) {CONCUR} 2010 - Concurrency Theory, 21th International Conference,
  {CONCUR} 2010, Paris, France, August 31-September 3, 2010. Proceedings. pp.
  162--176. Lecture Notes in Computer Science, Springer (2010)

\bibitem{BFG2018}
Bollig, B., Fortin, M., Gastin, P.: Gossiping in message-passing systems. CoRR
  \textbf{abs/1802.08641} (2018)

\bibitem{BolligFG21}
Bollig, B., Fortin, M., Gastin, P.: Communicating finite-state machines,
  first-order logic, and star-free propositional dynamic logic. J. Comput.
  Syst. Sci.  \textbf{115},  22--53 (2021)

\bibitem{BolligFN2026}
Bollig, B., Függer, M., Nowak, T.: Provable coordination for {LLM} agents via
  message sequence charts. CoRR  \textbf{abs/2604.17612} (2026), to appear in
  {ISoLA} 2026

\bibitem{BolligKM2010}
Bollig, B., Kuske, D., Meinecke, I.: Propositional dynamic logic for
  message-passing systems. Log. Methods Comput. Sci.  \textbf{6}(3) (2010)

\bibitem{BurloFS21}
Burl{\`{o}}, C.B., Francalanza, A., Scalas, A.: On the monitorability of
  session types, in theory and practice. In: 35th European Conference on
  Object-Oriented Programming ({ECOOP} 2021). Leibniz International Proceedings
  in Informatics, vol.~194, pp. 20:1--20:30. Schloss Dagstuhl --
  Leibniz-Zentrum f{\"{u}}r Informatik (2021)

\bibitem{CarboneM13}
Carbone, M., Montesi, F.: Deadlock-freedom-by-design: multiparty asynchronous
  global programming. In: Giacobazzi, R., Cousot, R. (eds.) The 40th Annual
  {ACM} {SIGPLAN-SIGACT} Symposium on Principles of Programming Languages,
  {POPL} '13, Rome, Italy - January 23 - 25, 2013. pp. 263--274. {ACM} (2013)

\bibitem{ChenR05}
Chen, F., Rosu, G.: {Java-MOP}: {A} monitoring oriented programming environment
  for {Java}. In: Halbwachs, N., Zuck, L.D. (eds.) Tools and Algorithms for the
  Construction and Analysis of Systems, 11th International Conference, {TACAS}
  2005, Held as Part of the Joint European Conferences on Theory and Practice
  of Software, {ETAPS} 2005, Edinburgh, UK, April 4-8, 2005, Proceedings. pp.
  546--550. Lecture Notes in Computer Science, Springer (2005)

\bibitem{ChenBDHY11}
Chen, T., Bocchi, L., Deni{\'{e}}lou, P., Honda, K., Yoshida, N.: Asynchronous
  distributed monitoring for multiparty session enforcement. In: Bruni, R.,
  Sassone, V. (eds.) Trustworthy Global Computing - 6th International
  Symposium, {TGC} 2011, Aachen, Germany, June 9-10, 2011. Revised Selected
  Papers. pp. 25--45. Lecture Notes in Computer Science, Springer (2011)

\bibitem{CoppoDV15}
Coppo, M., Dezani-Ciancaglini, M., Venneri, B.: Self-adaptive multiparty
  sessions. Service Oriented Computing and Applications  \textbf{9}(3--4),
  249--268 (2015)

\bibitem{DallaPredaGGLM17}
Dalla~Preda, M., Gabbrielli, M., Giallorenzo, S., Lanese, I., Mauro, J.:
  Dynamic choreographies: Theory and implementation. Logical Methods in
  Computer Science  \textbf{13}(2) (2017)

\bibitem{DawesR19}
Dawes, J.H., Reger, G.: Specification of temporal properties of functions for
  runtime verification. In: Proceedings of the 34th {ACM/SIGAPP} Symposium on
  Applied Computing ({SAC} 2019). pp. 2206--2214. ACM (2019)

\bibitem{DemangeonHHNY15}
Demangeon, R., Honda, K., Hu, R., Neykova, R., Yoshida, N.: Practical
  interruptible conversations: distributed dynamic verification with multiparty
  session types and {Python}. Formal Methods Syst. Des.  \textbf{46}(3),
  197--225 (2015)

\bibitem{fidge1988timestamps}
Fidge, C.J.: Timestamps in message-passing systems that preserve the partial
  ordering. In: Proceedings of the 11th Australian Computer Science Conference.
  Australian Computer Science Communications, vol.~10, pp. 56--66 (Feb 1988)

\bibitem{FrancalanzaMT20}
Francalanza, A., Mezzina, C.A., Tuosto, E.: Towards choreographic-based
  monitoring. In: Ulidowski, I., Lanese, I., Schultz, U.P., Ferreira, C. (eds.)
  Reversible Computation: Extending Horizons of Computing, Lecture Notes in
  Computer Science, vol. 12070, pp. 128--150. Springer (2020)

\bibitem{FrancalanzaPS18}
Francalanza, A., P{\'{e}}rez, J.A., S{\'{a}}nchez, C.: Runtime verification for
  decentralised and distributed systems. In: Bartocci, E., Falcone, Y. (eds.)
  Lectures on Runtime Verification - Introductory and Advanced Topics, Lecture
  Notes in Computer Science, vol. 10457, pp. 176--210. Springer (2018)

\bibitem{GopinathanFNTB26}
Gopinathan, K., Feser, J., Naim, M., Tavares, Z., Bingham, E.: Pact: A
  choreographic language for agentic ecosystems. In: Proceedings of the 2nd
  International Workshop on Choreographic Programming (CP 2026) (2026), to
  appear. Also available as arXiv:2605.03143

\bibitem{GrafPQ11}
Graf, S., Peled, D.A., Quinton, S.: Monitoring distributed systems using
  knowledge. In: Bruni, R., Dingel, J. (eds.) Formal Techniques for Distributed
  Systems - Joint 13th {IFIP} {WG} 6.1 International Conference, {FMOODS} 2011,
  and 31st {IFIP} {WG} 6.1 International Conference, {FORTE} 2011, Reykjavik,
  Iceland, June 6-9, 2011. Proceedings. pp. 183--197. Lecture Notes in Computer
  Science, Springer (2011)

\bibitem{HavelundR02}
Havelund, K., Ro{\c{s}}u, G.: Synthesizing monitors for safety properties. In:
  Tools and Algorithms for the Construction and Analysis of Systems ({TACAS}
  2002). Lecture Notes in Computer Science, vol.~2280, pp. 342--356. Springer
  (2002)

\bibitem{HeuvelPD23}
van~den Heuvel, B., P{\'{e}}rez, J.A., Dobre, R.A.: Monitoring blackbox
  implementations of multiparty session protocols. In: Katsaros, P., Nenzi, L.
  (eds.) Runtime Verification - 23rd International Conference, {RV} 2023,
  Thessaloniki, Greece, October 3-6, 2023, Proceedings. pp. 66--85. Lecture
  Notes in Computer Science, Springer (2023)

\bibitem{HirschG22}
Hirsch, A.K., Garg, D.: Pirouette: higher-order typed functional
  choreographies. Proc. {ACM} Program. Lang.  \textbf{6}({POPL}),  1--27 (2022)

\bibitem{HondaYC16}
Honda, K., Yoshida, N., Carbone, M.: Multiparty asynchronous session types. J.
  {ACM}  \textbf{63}(1),  9:1--9:67 (2016)

\bibitem{itu-msc}
{ITU-T}: Recommendation {Z.120}: Message sequence chart ({MSC}). Tech. rep.,
  International Telecommunication Union (2011)

\bibitem{KamathZXUSM25}
Kamath, A., Zhang, S., Xu, C., Ugare, S., Singh, G., Misailovic, S.: Enforcing
  temporal constraints for {LLM} agents. CoRR  \textbf{abs/2512.23738} (2025)

\bibitem{Lamport78}
Lamport, L.: Time, clocks, and the ordering of events in a distributed system.
  Commun. {ACM}  \textbf{21}(7),  558--565 (1978)

\bibitem{LeuckerS09}
Leucker, M., Schallhart, C.: A brief account of runtime verification. J. Log.
  Algebraic Methods Program.  \textbf{78}(5),  293--303 (2009)

\bibitem{LiuSL17}
Liu, Y.A., Stoller, S.D., Lin, B.: From clarity to efficiency for distributed
  algorithms. {ACM} Transactions on Programming Languages and Systems
  \textbf{39}(3),  12:1--12:41 (2017)

\bibitem{mattern1989virtual}
Mattern, F.: Virtual time and global states of distributed systems. In:
  Parallel and Distributed Algorithms: Proceedings of the International
  Workshop on Parallel and Distributed Algorithms. pp. 215--226. North-Holland,
  Amsterdam (1989)

\bibitem{Mazurkiewicz86}
Mazurkiewicz, A.W.: Trace theory. In: Brauer, W., Reisig, W., Rozenberg, G.
  (eds.) Petri Nets: Central Models and Their Properties, Advances in Petri
  Nets 1986, Part II, Proceedings of an Advanced Course, Bad Honnef, Germany,
  8-19 September 1986. pp. 279--324. Lecture Notes in Computer Science,
  Springer (1986)

\bibitem{MeenakshiR00}
Meenakshi, B., Ramanujam, R.: Reasoning about message passing in finite state
  environments. In: Montanari, U., Rolim, J.D.P., Welzl, E. (eds.) Automata,
  Languages and Programming, 27th International Colloquium, {ICALP} 2000,
  Geneva, Switzerland, July 9-15, 2000, Proceedings. pp. 487--498. Lecture
  Notes in Computer Science, Springer (2000)

\bibitem{Montesi2023}
Montesi, F.: Introduction to Choreographies. Cambridge University Press (2023)

\bibitem{MostafaB15}
Mostafa, M., Bonakdarpour, B.: Decentralized runtime verification of {LTL}
  specifications in distributed systems. In: 2015 {IEEE} International Parallel
  and Distributed Processing Symposium, {IPDPS} 2015, Hyderabad, India, May
  25-29, 2015. pp. 494--503. {IEEE} Computer Society (2015)

\bibitem{Moura2021}
de~Moura, L., Ullrich, S.: The {Lean 4} theorem prover and programming
  language. In: Platzer, A., Sutcliffe, G. (eds.) Automated Deduction - {CADE}
  28 - 28th International Conference on Automated Deduction, Virtual Event,
  July 12-15, 2021, Proceedings. pp. 625--635. Lecture Notes in Computer
  Science, Springer (2021)

\bibitem{PaduraruBS26}
Paduraru, C., Bouruc, P.L., Stefanescu, A.: A trace-based assurance framework
  for agentic {AI} orchestration: Contracts, testing, and governance. In:
  Proceedings of the 21st International Conference on Evaluation of Novel
  Approaches to Software Engineering ({ENASE} 2026). pp. 420--427. SciTePress
  (2026)

\bibitem{LeonHL14}
{Ponce-de-Le{\'{o}}n}, H., Haar, S., Longuet, D.: Distributed testing of
  concurrent systems: Vector clocks to the rescue. In: Ciobanu, G., M{\'{e}}ry,
  D. (eds.) Theoretical Aspects of Computing - {ICTAC} 2014 - 11th
  International Colloquium, Bucharest, Romania, September 17-19, 2014.
  Proceedings. pp. 369--387. Lecture Notes in Computer Science, Springer (2014)

\bibitem{SamadiGK23}
Samadi, M., Ghassemi, F., Khosravi, R.: Decentralized runtime verification of
  message sequences in message-based systems. Acta Informatica  \textbf{60}(2),
   145--178 (2023)

\bibitem{ScheffelS14}
Scheffel, T., Schmitz, M.: Three-valued asynchronous distributed runtime
  verification. In: 12th {ACM/IEEE} International Conference on Formal Methods
  and Models for Codesign ({MEMOCODE} 2014). pp. 52--61. IEEE (2014)

\bibitem{Schneider00}
Schneider, F.B.: Enforceable security policies. {ACM} Trans. Inf. Syst. Secur.
  \textbf{3}(1),  30--50 (2000)

\bibitem{SenVAR04}
Sen, K., Vardhan, A., Agha, G., Ro{\c{s}}u, G.: Efficient decentralized
  monitoring of safety in distributed systems. In: Proceedings of the 26th
  International Conference on Software Engineering ({ICSE} 2004). pp. 418--427.
  IEEE Computer Society (2004)

\bibitem{SenVAR06}
Sen, K., Vardhan, A., Agha, G., Rosu, G.: Decentralized runtime analysis of
  multithreaded applications. In: 20th International Parallel and Distributed
  Processing Symposium {(IPDPS} 2006), Proceedings, 25-29 April 2006, Rhodes
  Island, Greece. {IEEE} (2006)

\bibitem{ShenKK23}
Shen, G., Kashiwa, S., Kuper, L.: {HasChor}: Functional choreographic
  programming for all (functional pearl). Proc. {ACM} Program. Lang.
  \textbf{7}({ICFP}),  541--565 (2023)

\bibitem{Singh03}
Singh, M.P.: Distributed enactment of multiagent workflows: Temporal logic for
  web service composition. In: The Second International Joint Conference on
  Autonomous Agents and Multiagent Systems, {AAMAS} 2003. pp. 907--914. {ACM}
  (2003)

\bibitem{StolzB06}
Stolz, V., Bodden, E.: Temporal assertions using {AspectJ}. Electronic Notes in
  Theoretical Computer Science  \textbf{144}(4),  109--124 (2006)

\bibitem{WangPS2025}
Wang, H., Poskitt, C.M., Sun, J.: {AgentSpec}: Customizable runtime enforcement
  for safe and reliable {LLM} agents. In: Proceedings of the 48th IEEE/ACM
  International Conference on Software Engineering ({ICSE} 2026). pp. 1--12.
  ACM (2026)

\bibitem{ZhangSELD25}
Zhang, Y., Emma, S.Y., En, A.L.J., Dong, J.S.: {RvLLM}: {LLM} runtime
  verification with domain knowledge. In: Advances in Neural Information
  Processing Systems 38: Annual Conference on Neural Information Processing
  Systems 2025, {NeurIPS} 2025 (2025)

\end{thebibliography}

\end{document}